\documentclass[english,aps,pra,twocolumn]{revtex4-1}

\usepackage[english]{babel}
\usepackage{amsmath}
\usepackage{amsthm}
\usepackage{amsfonts}
\usepackage{amssymb}
\usepackage{amscd}
\usepackage{times}
\usepackage{amsbsy}
\usepackage{hyperref}
\usepackage{graphicx}
\usepackage{booktabs}
\usepackage{array}
\usepackage{subfigure}
\usepackage{enumerate}
\usepackage{url}
\usepackage[nodayofweek]{datetime}
\usepackage[english]{babel}
\usepackage{mathtools}
\usepackage[toc,page]{appendix}
\usepackage{verbatim} 
\usepackage{amsthm} 
\usepackage{enumitem}
\usepackage{enumerate}
\usepackage{bbm} 
\usepackage[normalem]{ulem} 
\usepackage{bm}
\usepackage{mathtools}
\usepackage{color}



\newcommand{\be}{\begin}
\newcommand{\e}{\end}
\newcommand{\beq}{\begin{equation}}
\newcommand{\eeq}{\end{equation}}
\newcommand{\beqs}{\begin{equation*}}
\newcommand{\eeqs}{\end{equation*}}
\renewcommand{\l}{\left}
\renewcommand{\r}{\right}

\renewcommand{\d}{\mathrm{d}} 


\newcommand{\curly}[1]{\mathcal{#1}}

\newcommand{\on}{{\otimes n}}


\newcommand{\om}{\omega}

\newcommand{\Lam}{\Lambda}

\newcommand{\de}{\delta}

\newcommand{\De}{\Delta}



\newcommand{\ket}[1]{|#1\rangle}



%
%
%
%
\newcommand{\tr}[1]{\textnormal{tr}\left[#1\right]}	
\newcommand{\trr}{\mathrm{tr}}
%
\newtheorem{thm}{Theorem}
\newtheorem{lm}[thm]{Lemma}

\theoremstyle{definition}

\theoremstyle{remark}

%
%
%
%
\newcommand{\R}{\mathbb R}

\newcommand{\C}{\mathbb C}

\begin{document}

\title[Approximate cloning and broadcasting]{Information-theoretic limitations on approximate quantum cloning and broadcasting}
\date{\today}

\author{Marius Lemm}
\affiliation{Department of Mathematics, California Institute of Technology, Pasadena, CA 91125}
\author{Mark M.\ Wilde}
\affiliation{Hearne Institute for Theoretical Physics, Department of Physics and Astronomy, Center for Computation and Technology, Louisiana State University, Baton Rouge, Louisiana 70803, USA}

\begin{abstract}
We prove new quantitative limitations on any approximate
simultaneous cloning or broadcasting of mixed states. The results are
based on information-theoretic (entropic) considerations and generalize
the well known no-cloning and no-broadcasting theorems. We also observe and exploit the fact that the universal cloning machine on the symmetric subspace of $n$ qudits and symmetrized partial trace channels are dual to each
other. This duality manifests itself both in the algebraic sense of
adjointness of quantum channels and in the operational sense that a universal
cloning machine can be used as an approximate recovery channel for a
symmetrized partial trace channel and vice versa. The duality extends to give control on the performance of generalized UQCMs on subspaces more general than the symmetric subspace. This gives a way to quantify the usefulness of a-priori information in the context of cloning. For example, we can control the performance of an antisymmetric analogue of the UQCM in recovering from the loss of $n-k$ fermionic particles.
\end{abstract}

\maketitle

A direct consequence of the fundamental principles of quantum theory is that
there does not exist a ``machine'' (unitary map) that can clone an
arbitrary input state \cite{Dieks,Wootters}. This no-cloning theorem and its
generalization to mixed states, the ``no-broadcasting theorem''
\cite{Barnumetal96}, exclude the possibility of making perfect ``quantum
backups'' of a quantum state and are essential for our understanding of
quantum information processing. For instance, since decoherence is
such a formidable obstacle to building a quantum computer and, at the same
time, we cannot use quantum backups to protect quantum
information against this decoherence, considerable effort has been devoted to protecting the stored information by way of 
\emph{quantum error correction}  \cite{Knill,Manda,Shor}.

Given these no-go results, it is natural to ask how well one can do when settling for \emph{approximate} cloning or broadcasting.
Numerous theoretical and experimental works have investigated such ``approximate cloning machines'' (see \cite{Bu,Gisin,PhysRevA.58.1827,Karen,KW99,Lamas,Scarani,Review,Zhu,Chatterjeeetal} and references therein).
These cloning machines can be of great help for \emph{state estimation}. They can also be of great help to an adversary who is eavesdropping on an encrypted communication, and so knowing the limitations of approximate cloning machines is relevant for \emph{quantum key distribution}. 

In this paper, we derive \emph{new quantitative limitations} posed on any approximate cloning/broadcast (defined below) by \emph{quantum information theory}. Our results generalize the standard no-cloning and no-broadcasting results for mixed states, which are recalled below (Theorems 1 and 2). We draw on an approach of Kalev and Hen \cite{KH08},
who introduced the idea of studying no-broadcasting via the fundamental principle of the monotonicity of the quantum relative entropy \cite{Lindblad1975,U77}.
When at least one state is approximately cloned, while the other is approximately broadcast, we derive an inequality which implies rather strong limitations (Theorem~\ref{thm:mainNC}). The result can be understood as a quantitative version of the standard no-cloning theorem. The proof uses only fundamental properties of the relative entropy.
By invoking recent developments linking the monotonicity of relative entropy to recoverability \cite{FR,BLW,SFR,Wilde,Jungeetal,Sutteretal},
we can derive a stronger inequality (Theorem~\ref{thm:mainNCnew}). Under certain circumstances, this stronger inequality provides an \emph{explicit channel} which can be used to \emph{improve the quality} of the original cloning/broadcast (roughly speaking, how close the output is to the input) a posteriori. This cloning/broadcasting-improving channel is nothing
 but the parallel application of the rotation-averaged Petz recovery map \cite{Jungeetal}, highlighting its naturality in this context.

Related results of ours (Theorems~\ref{thm:PT-cloner-recovery} and \ref{thm:cloner-PT-recovery}) compare a given state of $n$ qudits to the maximally mixed state on the (permutation-)symmetric subspace of $n$ qudits. We establish a duality between universal quantum cloning machines (UQCMs) \cite{Bu,Gisin,PhysRevA.58.1827} and symmetrized partial trace channels, in the operational sense that a UQCM can be used as an approximate recovery channel for a symmetrized partial trace channel and vice versa. It is also immediate to observe that these channels are adjoints of each other, up to a constant. A context different from ours, in which a duality between partial trace and universal cloning has been observed, is in quantum data compression \cite{Giulio}.

 As a special case of Theorem~\ref{thm:PT-cloner-recovery}, we recover one of the main results of Werner \cite{PhysRevA.58.1827}, regarding the optimal fidelity for $k\to n$ cloning of tensor-product pure states~$\phi^{\otimes k}$. We also draw an analogy of these results to former results from \cite{PhysRevA.93.062314} regarding photon loss and amplification, the analogy being that cloning is like particle amplification and partial trace like particle loss. 
 
 The methods generalize to subspaces beyond the symmetric subspace: Theorem~\ref{thm:new} controls the performance of an analogue of the UQCM in recovering from a loss of $n-k$ particles when we are given \emph{a priori information} about the states (in the sense that we know on which subspaces they are supported, e.g., because we are working in an irreducible representation of some symmetry group). As an application of this, we obtain an estimate of the performance of an \emph{antisymmetric} analogue of the UQCM for $k\to n$ cloning of fermionic particles. 
 
The methods also yield information-theoretic restrictions for general approximate broadcasts of two mixed states. 



\textit{Background}---The
well known no-cloning theorem for pure states establishes that two pure states can be simultaneously cloned iff they are identical or orthogonal. It is generalized by the following two theorems, a no-cloning theorem for mixed states and a no-broadcasting theorem \cite{Barnumetal96,KH08}. 

Let $\sigma$ be a mixed state on a system $A$. By definition, a (two-fold) \emph{broadcast} of the input state $\sigma$ is a quantum channel $\Lam_{A\to AB}$, such that the output state 
$$
\rho^{\mathrm{out}}_{AB}:=\Lam_{A\to AB}(\sigma_A)
$$
has the identical marginals $\rho^{\mathrm{out}}_{A}=\rho^{\mathrm{out}}_{B}=\sigma$.

A particular broadcast corresponds to the case $\rho^{\mathrm{out}}_{AB}=\sigma_A\otimes \sigma_B$, which is called a \emph{cloning} of the state $\sigma$. We call two mixed states $\sigma_1$ and $\sigma_2$ orthogonal if $\sigma_1\sigma_2=0$.

\be{thm}[No cloning for mixed states, \cite{Barnumetal96,KH08}]
\label{thm:NC}
Two mixed states $\sigma_1,\sigma_2$ can be simultaneously cloned iff they are orthogonal or identical. 
\e{thm}

\be{thm}[No broadcasting, \cite{Barnumetal96}]
\label{thm:UBC}
Two mixed states $\sigma_1,\sigma_2$ can be simultaneously broadcast iff they commute.
\e{thm}

By a ``simultaneous cloning/broadcast,'' we mean that the same choice of $\Lam_{A\to AB}$ is made for broadcasts of $\sigma_1$ and $\sigma_2$.

These results were essentially first proved in \cite{Barnumetal96}, albeit under an additional minor invertibility assumption. Alternative proofs were given in \cite{Lindblad99, Leifer,Barnumetal07, KH08}. Sometimes Theorem~\ref{thm:UBC} is called the ``universal no-broadcasting theorem'' to distinguish it from local no-broadcasting results for multipartite systems \cite{Pianietal}. Quantitative versions of the local no-broad\-casting results for multipartite systems were reviewed very recently by Piani \cite{Piani16} (see also \cite{Chatterjeeetal}). 

No-cloning and no-broadcasting are also closely related to the monogamy property of entanglement via the Choi-Jamiolkowski isomorphism  \cite{Leifer}. 



In this paper, we study limitations on \emph{approximate} cloning/broadcasting, which we define as follows:

\be{defn}[Approximate cloning/broadcast]
Let $\sigma,\tilde\sigma$ be mixed states. An $n$-fold \emph{approximate broadcast} of $\sigma$ is a quantum channel $\Lam_{A\to A_1\cdots A_n}$ such that the output state has the identical marginals $\tilde\sigma$. That is, we consider the situation
\beq
\label{eq:approxdefn}
\rho^{\mathrm{out}}_{A_1}=\cdots=\rho^{\mathrm{out}}_{A_n}=\tilde\sigma,
\eeq
where $\rho^{\mathrm{out}}_{A_1\cdots A_n}:=\Lam(\sigma_A)$. An \emph{approximate cloning} is an approximate broadcast for which $\rho^{\mathrm{out}}_{A_1\cdots A_n}=\tilde\sigma_{A_1}\otimes\cdots\otimes \tilde\sigma_{A_n}$. The main case of interest is $n=2$. 
\e{defn}


Our main results give bounds on (appropriate notions of) distance between $\tilde\sigma_i$ and $\sigma_i$ for $i=1,2$, given any pair of input states $\sigma_1$ and~$\sigma_2$. 

\textit{Conventions}---The notions of approximate cloning / broadcast stated above are direct generalizations of the notions of cloning/broadcasting in the literature related to Theorems~\ref{thm:NC} and~\ref{thm:UBC}. Regarding the input states, these notions are more general than the one used in the cloning machine literature \cite{Scarani}; we allow for the input states to be arbitrary, whereas they are usually pure tensor-power states $\psi^{\on}$ for cloning machines. Our notion of approximate cloning requires the output states to be tensor-product states. Hence, some quantum cloning machines (in particular the universal cloning machine when acting on general input states) are approximate \emph{broadcasts} by the definition given above. 

Let us fix some notation. Given two mixed states~$\rho$ and~$\sigma$, we denote the \emph{relative entropy} of $\rho$ with respect to $\sigma$ by $D(\rho\| \sigma):=\tr{\rho(\log \rho-\log\sigma)}$, where $\log$ is the natural logarithm \cite{U62}. We define the fidelity by $F(\rho,\sigma):=\|\sqrt{\rho}\sqrt{\sigma}\|_1^2 \in [0,1]$ \cite{U73}, where $\|\cdot\|_1$ is the trace norm.

Since all of our bounds involve the relative entropy $D(\sigma_1\|\sigma_2)$ of the input states $\sigma_1$ and $\sigma_2$, they are only informative when $D(\sigma_1\|\sigma_2)<\infty$. This is equivalent to
 $
\ker\sigma_2\subseteq \ker\sigma_1 ,
$
 and we \emph{assume} this in the following for simplicity. We note that if this assumption fails, our results can still be applied by approximating $\sigma_2$ (in trace distance) with $\sigma_2^\varepsilon:=\varepsilon\sigma_1+(1-\varepsilon)\sigma_2$ for $\varepsilon \in (0,1)$, which satisfies $\ker\sigma_2^\varepsilon\subseteq \ker\sigma_1$. 

\textit{Main results}---We will now present our main results. All proofs are rather short and deferred to \cite{LW16supmat}. 

\textit{Restrictions on approximate cloning/broadcasting}---Our first main result concerns limitations if $\sigma_1$ is approximately broadcast $n$-fold while $\sigma_2$ is approximately cloned $n$-fold.

\begin{thm}[Limitations on approximate cloning / broadcasting]
\label{thm:mainNC}
Fix two mixed states $\sigma_{1}$ and $\sigma_{2}$. Let $\Lambda_{A\rightarrow A_{1}\cdots A_{n}}$ be a
quantum channel  such that $n\geq2$
and the two output states $\rho_{i,A_{1}\cdots A_{n}}^{\operatorname{out}%
}:=\Lambda(\sigma_{i,A})$ for $i=1,2$ satisfy%
\beq
\label{eq:approxCB}
\begin{aligned}
\rho_{1,A_{1}}^{\operatorname{out}} &  =\cdots=\rho_{1,A_{n}}%
^{\operatorname{out}}=\tilde{\sigma}_{1},\\
\rho_{2,A_{1}\cdots A_{n}}^{\operatorname{out}} &  =\tilde{\sigma}_{2,A_{1}%
}\otimes\cdots\otimes\tilde{\sigma}_{2,A_{n}},
\end{aligned}.
\eeq
Thus, $\Lambda_{A\rightarrow A_{1}\cdots A_{n}}$ approximately broadcasts
$\sigma_{1,A}$ and approximately clones $\sigma_{2,A}$. Then
\beq
\label{eq:mainNC}
\begin{aligned}
D(\sigma_1\|\sigma_2)-D(\tilde\sigma_1\|\tilde\sigma_2)&\geq  (n-1)D(\tilde\sigma_1\|\tilde\sigma_2)\\
&\geq \frac{n-1}{2}\|\tilde\sigma_1-\tilde\sigma_2\|_1^2.
\end{aligned}
\eeq
\end{thm}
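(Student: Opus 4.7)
The plan is to combine data-processing (monotonicity of relative entropy) with subadditivity of the von Neumann entropy, exploiting the key fact that the cloning condition makes the target state $\tilde{\sigma}_2^{\otimes n}$ a product. First, I would apply monotonicity of relative entropy under the channel $\Lambda_{A\to A_1\cdots A_n}$ to get
\begin{equation*}
D(\sigma_1\|\sigma_2) \;\geq\; D\bigl(\rho_{1,A_1\cdots A_n}^{\mathrm{out}}\,\bigl\|\,\tilde{\sigma}_{2,A_1}\otimes\cdots\otimes\tilde{\sigma}_{2,A_n}\bigr).
\end{equation*}
The goal is then to show that the right-hand side is at least $n\,D(\tilde{\sigma}_1\|\tilde{\sigma}_2)$, since this immediately rewrites as $D(\tilde{\sigma}_1\|\tilde{\sigma}_2) + (n-1)D(\tilde{\sigma}_1\|\tilde{\sigma}_2)$ and yields the first inequality of \eqref{eq:mainNC}.

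The central observation is that, because the second argument is a tensor power, the cross term in the relative entropy decouples across the factors:
\begin{equation*}
\log\bigl(\tilde{\sigma}_2^{\otimes n}\bigr) \;=\; \sum_{i=1}^n I\otimes\cdots\otimes\log\tilde{\sigma}_2\otimes\cdots\otimes I,
\end{equation*}
with $\log\tilde{\sigma}_2$ in slot $i$. Taking the trace against $\rho_1^{\mathrm{out}}$ and using the broadcasting condition $\rho_{1,A_i}^{\mathrm{out}}=\tilde{\sigma}_1$ for every $i$, one obtains $\tr{\rho_1^{\mathrm{out}}\log\tilde{\sigma}_2^{\otimes n}} = n\,\tr{\tilde{\sigma}_1\log\tilde{\sigma}_2}$. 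Therefore
\begin{equation*}
D\bigl(\rho_1^{\mathrm{out}}\,\bigl\|\,\tilde{\sigma}_2^{\otimes n}\bigr) \;=\; -H\bigl(\rho_1^{\mathrm{out}}\bigr) \;-\; n\,\tr{\tilde{\sigma}_1\log\tilde{\sigma}_2}.
\end{equation*}
Subadditivity of the von Neumann entropy together with $\rho_{1,A_i}^{\mathrm{out}}=\tilde{\sigma}_1$ gives $H(\rho_1^{\mathrm{out}})\leq n\,H(\tilde{\sigma}_1)$, and plugging this in yields the desired bound $D(\rho_1^{\mathrm{out}}\|\tilde{\sigma}_2^{\otimes n})\geq n\,D(\tilde{\sigma}_1\|\tilde{\sigma}_2)$.

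Chaining the two inequalities proves the first line of \eqref{eq:mainNC}; the second line then follows at once by applying Pinsker's inequality $D(\tilde{\sigma}_1\|\tilde{\sigma}_2)\geq\frac{1}{2}\|\tilde{\sigma}_1-\tilde{\sigma}_2\|_1^2$ to the factor $D(\tilde{\sigma}_1\|\tilde{\sigma}_2)$ on the right. There is no real obstacle here: every ingredient (data processing, subadditivity, Pinsker) is standard, and the whole argument works because the cloning hypothesis on $\sigma_2$ gives a product target in the relative entropy while the broadcast hypothesis on $\sigma_1$ pins down each single-site marginal. The subtle point worth flagging is simply that the argument is \emph{asymmetric} in $\sigma_1$ and $\sigma_2$: the tensor-product structure is used only in the second slot of $D(\cdot\|\cdot)$, which is exactly why cloning of one state plus broadcasting of the other is enough to force the bound.
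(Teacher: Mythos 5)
Your proof is correct and is essentially identical to the paper's: the authors obtain Theorem~\ref{thm:mainNC} as the $m=n$ case of Theorem~\ref{thm:mainNCnew}, and their core computation is exactly your chain of monotonicity, the decomposition $\log(\tilde\sigma_2^{\otimes n})=\sum_i I\otimes\cdots\otimes\log\tilde\sigma_2\otimes\cdots\otimes I$ combined with the broadcast marginal condition, subadditivity of the von Neumann entropy, and Pinsker. The only difference is that the paper carries along the recoverability remainder term from \cite{Jungeetal} to get the stronger Theorem~\ref{thm:mainNCnew}, which your argument simply drops (as is permissible here since that term is non-negative).
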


The second inequality in \eqref{eq:mainNC} follows from the quantum Pinsker inequality \cite[Thm.~1.15]{OP93}.

To see that \eqref{eq:mainNC} is indeed restrictive for approximate cloning / broadcasting, let $n=2$ and suppose without loss of generality that $\sigma_1\neq \sigma_2$, so that
$
\de:=\frac{1}{6}\|\sigma_1-\sigma_2\|_1^2>0.
$
We can use the triangle inequality for $\|\cdot\|_1$ and the elementary inequality $2ab\leq a^2+b^2$ on the right-hand side in \eqref{eq:mainNC} to get
\beqs
D(\sigma_1\|\sigma_2)-D(\tilde\sigma_1\|\tilde\sigma_2)
+\frac{\|\sigma_1-\tilde\sigma_1\|_1^2}{2}
+\frac{\|\sigma_2-\tilde\sigma_2\|_1^2}{2}\geq \de.
\eeqs
Since $\sigma_1$ and $\sigma_2$ are fixed, the same is true for $\de>0$. Hence, \emph{for any approximate cloning/broadcasting operation \eqref{eq:approxCB}, at least one of the following three statements must hold}:
\begin{enumerate}
\item $\sigma_1$ is far from $\tilde\sigma_1$ (i.e., the channel acts poorly on the first state),
\item $\sigma_2$ is far from $\tilde\sigma_2$ (i.e., the channel acts poorly on the first state), or
\item there is a large decrease in the distinguishability of the states under the action of the channel, in the sense that $D(\sigma_1\|\sigma_2)-D(\tilde\sigma_1\|\tilde\sigma_2)$ is bounded from below by a constant.
\end{enumerate}

Thus, we have a quantitative version of Theorem \ref{thm:NC} (note that for $\sigma_i=\tilde\sigma_i$ ($i=1,2$), Theorem \ref{thm:mainNCnew} implies $\sigma_1=\sigma_2$). 

As anticipated in the introduction, we can prove a stronger version of Theorem~\ref{thm:mainNC} by invoking recent developments linking monotonicity of the relative entopy to recoverability \cite{FR,BLW,SFR,Wilde,Jungeetal,Sutteretal}.
The stronger version involves an additional non-negative term on the right-hand side in \eqref{eq:mainNC} and it contains an additional integer parameter $m\in\{1,\ldots,n\}$
(the case $m=n$ corresponds to Theorem~\ref{thm:mainNC}; the case $m=1$ is also useful as we explain after the theorem). 


\be{thm}[Stronger version of Theorem~\ref{thm:mainNC}]
\label{thm:mainNCnew}
Under the same assumptions as in Theorem~\ref{thm:mainNC}, for all $m\in\{1,\ldots,n\}$, there exists a recovery channel $\mathcal{R}_{A_{1}\cdots
A_{m}\rightarrow A}^{(m)}$ such that
\begin{multline}
\label{eq:mainNCnew}
D(\sigma_{1}\Vert\sigma_{2})-mD(\tilde{\sigma}_{1}\Vert\tilde{\sigma}_{2}%
)\geq\\
-\log F(\sigma_{1},(\mathcal{R}_{A_{1}\cdots A_{m}\rightarrow A}^{(m)}%
\circ\operatorname{tr}_{A_{m+1}\cdots A_{n}}\circ\Lambda)(\sigma_{1})).
\end{multline}
The recovery channel $\curly{R}^{(m)}\equiv \mathcal{R}_{A_{1}\cdots A_{m}\rightarrow A}^{(m)}$
satisfies the identity $\sigma_{2}=\mathcal{R}%
^{(m)}(\tilde{\sigma}_{2}^{\otimes m}).$ There exists an explicit choice for such an $\curly{R}^{(m)}$  with a formula depending only on $\sigma_2$ and $\Lam$ \cite{Jungeetal, LW16supmat}. 
\e{thm}

One can generalize Theorem~\ref{thm:mainNCnew} to the case of ``$k\to n$ cloning'' \cite{Scarani} where one starts from $k$-fold tensor copies $\sigma_1^{\otimes k}$ and $\sigma_2^{\otimes k}$ and broadcasts the former and clones the 
latter to states on an $n$-fold tensor product; this is Theorem 11 in \cite{LW16supmat}.



To see how the additional remainder term in \eqref{eq:mainNCnew} can be useful, we apply Theorem~\ref{thm:mainNCnew} with $m=1$. 
It implies that there exists a recovery channel $\mathcal{R}%
^{(1)}$ such that%
\beq
\begin{aligned}
D(\sigma_{1}\Vert\sigma_{2})-D(\tilde{\sigma}_{1}\Vert\tilde{\sigma}_{2})  &
\geq-\log F(\sigma_{1},\mathcal{R}^{(1)}(\tilde{\sigma}%
_{1})),\label{eq:local-recovery}\\
\sigma_{2}  &  =\mathcal{R}^{(1)}(\tilde{\sigma}_{2}).
\end{aligned}
\eeq
Now suppose that we are in a situation where the left hand side in \eqref{eq:local-recovery} is less than some $\varepsilon>0$. 
Then, \eqref{eq:local-recovery} implies that $\sigma_{1}    \approx\mathcal{R}^{(1)}(\tilde{\sigma}_{1})$ and $\sigma_{2}   =\mathcal{R}^{(1)}(\tilde{\sigma}_{2})$, where $\approx$ stands for $-\log F(\sigma_{1},\mathcal{R}^{(1)}(\tilde{\sigma}_1))<\varepsilon$. In other words, we can (approximately) recover the input states $\sigma_i$ from the output marginals $\tilde\sigma_i$.
Therefore, in a next step, we can improve the quality of the cloning / broadcasting channel $\Lambda$ by post-composing it with $n$ parallel uses of the local
recovery channel $\mathcal{R}^{(1)}$. Indeed, the \emph{improved cloning channel} $
\Lam_{\mathrm{impr}}:=(\mathcal{R}^{(1)})^{\otimes n}\circ \Lam,
$
has the new output states $\rho_{i,A_{1}\ldots A_n}^{\operatorname{impr}}:=\Lam_{\mathrm{impr}}(\sigma_i)$, $(i=1,2)$ which satisfy
$$
\begin{aligned}
\rho_{1,A_{1}}^{\operatorname{impr}} &  =\cdots=\rho_{1,A_{n}}%
^{\operatorname{impr}}=\mathcal{R}^{(1)}(\tilde{\sigma}_{1})\approx     \sigma_1 , \\
\rho_{2,A_{1}\cdots A_{n}}^{\operatorname{impr}} &  =\sigma_{2,A_{1}%
}\otimes\cdots\otimes\sigma_{2,A_{n}} .
\end{aligned}
$$
Here, $\approx$ again stands for $-\log F(\sigma_{1},\mathcal{R}^{(1)}(\tilde{\sigma}_1))<\varepsilon$.

That is, we have found a strategy to improve the output of the cloning channel $\Lambda$, namely to the output of $\Lam_{\mathrm{impr}}$. 

\textit{Universal cloning machines and symmetrized partial trace channels}---In our next results, we consider a particular example of an approximate broadcasting channel
well known in quantum information theory \cite{PhysRevA.58.1827,KW99,Scarani}, a universal quantum cloning machine (UQCM). We connect the UQCM
to relative entropy and recoverability. 

We recall that the UQCM\ is the
optimal cloner for tensor power pure states, in the sense that the marginal states of its
output have the optimal fidelity with the input state
\cite{PhysRevA.58.1827,KW99}. Let $k$ and $n$ be
integers such that $1\leq k\leq n$. In general, one considers a $k\rightarrow n$
UQCM as acting on $k$ copies $\psi^{\otimes k}$\ of an input pure state $\psi
$ of dimension $d$ (a qudit), which produces an output density operator $\rho^{(n)}$, a state of $n$ qudits. From Werner's work
\cite{PhysRevA.58.1827}, the UQCM is known to be 
\begin{equation}
\mathcal{C}_{k\rightarrow n}(\omega^{(k)})\equiv\frac{d[k]}{d[n]}%
\Pi_{\operatorname{sym}}^{d,n}\left[  \Pi_{\operatorname{sym}}^{d,k}%
\omega^{(k)}\Pi_{\operatorname{sym}}^{d,k}\otimes I^{n-k}\right]
\Pi_{\operatorname{sym}}^{d,n}.
\label{eq:cloning-channel}
\end{equation}
Here $\Pi_{\operatorname{sym}}^{d,n}$ is the projection onto the
(permutation-)symmetric subspace of $(\mathbb{C}^d)^{\otimes n}$, which has dimension $d[n]:=\binom{d+n-1}%
{n}$. We note that $\mathcal{C}_{k\rightarrow n}$ is trace-preserving when acting on the symmetric subspace.

The main results here are Theorems \ref{thm:PT-cloner-recovery} and \ref{thm:cloner-PT-recovery}, which highlight the duality between the UQCM \eqref{eq:cloning-channel} and the following symmetrized partial trace channel
\begin{equation}
\mathcal{P}_{n\rightarrow k}(\cdot)\equiv\Pi_{\operatorname{sym}}%
^{d,k}\operatorname{tr}_{n-k}\!\left[  \Pi_{\operatorname{sym}}^{d,n}(\cdot
)\Pi_{\operatorname{sym}}^{d,n}\right]  \Pi_{\operatorname{sym}}^{d,k},\label{eq:partial-trace-channel}
\end{equation}
In addition to the operational sense of duality between the partial trace channel $\mathcal{P}_{n\rightarrow k}$  and the UQCM $\mathcal{C}_{k\rightarrow n}$ which is established by Theorems \ref{thm:PT-cloner-recovery} and \ref{thm:cloner-PT-recovery}, the two are dual in the sense of quantum channels (up to constant). That is, $\mathcal{P}_{n\rightarrow k}^{\dag}=\left(
d[n]/d[k]\right)  \mathcal{C}_{k\rightarrow n}$.


Our results will quantify the quality of the UQCM for certain tasks in terms of the relative entropy $D(\omega^{(n)}\Vert\pi
_{\operatorname{sym}}^{d,n})$, which is between a general $n$-qudit state $\omega^{(n)}$ and the maximally mixed state
$\pi_{\operatorname{sym}}^{d,n}$ of the symmetric subspace. We consider the maximally mixed state $\pi_{\operatorname{sym}}^{d,n}$ as a natural ``origin'' from which to measure the ``distance'' $D(\omega^{(n)}\Vert\pi
_{\operatorname{sym}}^{d,n})$ since it is a (Haar-)random mixture of tensor-power pure states.

We recall what one obtains from the standard monotonicity of the relative entropy, namely
\begin{equation}
\label{eq:mono2}
D(\omega^{(n)}\Vert\pi_{\operatorname{sym}}^{d,n})\geq D(\mathcal{P}%
_{n\rightarrow k}(\omega^{(n)})\Vert\mathcal{P}_{n\rightarrow k}%
(\pi_{\operatorname{sym}}^{d,n})).
\end{equation}

Our next main result is the following strengthening of the entropy inequality
in \eqref{eq:mono2}:
\be{thm}\label{thm:PT-cloner-recovery}
Let $\omega^{(n)}$ be a state with support in the symmetric subspace of $(\mathbb{C}^d)^{\otimes n}$, let $\pi_{\operatorname{sym}}^{d,n}$ denote the maximally mixed state on this symmetric subspace, let 
$\mathcal{C}_{k\rightarrow n}$ denote the UQCM from \eqref{eq:cloning-channel}, and $\mathcal{P}%
_{n\rightarrow k}$ the symmetrized partial trace channel from
\eqref{eq:partial-trace-channel}. Then
\begin{multline}
D(\omega^{(n)}\Vert\pi_{\operatorname{sym}}^{d,n})\geq D(\mathcal{P}%
_{n\rightarrow k}(\omega^{(n)})\Vert\mathcal{P}_{n\rightarrow k}%
(\pi_{\operatorname{sym}}^{d,n}))\label{eq:refined-partial-trace-cloner}\\
+D(\omega^{(n)}\Vert(\mathcal{C}_{k\rightarrow n}\circ\mathcal{P}%
_{n\rightarrow k})(\omega^{(n)})).
\end{multline}
\e{thm}
The entropy inequality in \eqref{eq:refined-partial-trace-cloner} can be interpreted as follows: The ability
of a $k\rightarrow n$ UQCM to recover an $n$-qubit state $\om^{(n)}$ from the loss of $n-k$ particles is
limited by the decrease of distinguishability between $\omega^{(n)}$ and
$\pi_{\operatorname{sym}}^{d,n}$ under the action of the partial trace
$\mathcal{P}_{n\rightarrow k}$. Thus, a small decrease in relative entropy
(i.e., $D(\omega^{(n)}\Vert\pi_{\operatorname{sym}}^{d,n})-D(\mathcal{P}%
(\omega^{(n)})\Vert\mathcal{P}(\pi_{\operatorname{sym}}^{d,n}))\approx
\varepsilon$) implies that a $k\rightarrow n$ UQCM $\mathcal{C}_{k\rightarrow
n}$ will perform well at recovering $\omega^{(n)}$ from $\mathcal{P}%
_{n\rightarrow k}(\omega^{(n)})$. We can also observe that $\mathcal{C}_{k\rightarrow n}$ is
the Petz recovery map corresponding to the state $\sigma=\pi
_{\operatorname{sym}}^{d,n}$ and channel $\mathcal{N}=\operatorname{tr}_{n-k}$
(as defined in \cite{LW16supmat}). 

As an application of Theorem~\ref{thm:PT-cloner-recovery}, we consider the special case that is most common in the context of quantum cloning \cite{PhysRevA.58.1827,KW99,Scarani}. We set
$\omega^{(n)}=\phi^{\otimes n}$ for a pure state $\phi$. In this case,
\beq
\begin{aligned}
& \!\!\!\!\!\!\!\! D(\phi^{\otimes n}\Vert\pi_{\operatorname{sym}}^{d,n})-D(\mathcal{P}%
_{n\rightarrow k}(\phi^{\otimes n})\Vert\mathcal{P}_{n\rightarrow k}%
(\pi_{\operatorname{sym}}^{d,n})) \\
& =-\log(d[k]/d[n])\geq D(\phi^{\otimes n}\Vert\mathcal{C}_{k\rightarrow n}%
(\phi^{\otimes k})).
\end{aligned}
\eeq
By estimating $D\geq-\log F$, we recover one of the main results of \cite{PhysRevA.58.1827}, which is that the $k\rightarrow n$
UQCM has the following performance when attempting to recover $n$ copies of
$\phi$ from $k$ copies:%
\begin{equation}
\label{eq:Werner}
F(\phi^{\otimes n},\mathcal{C}_{k\rightarrow n}(\phi^{\otimes k}))\geq
d[k]/d[n].
\end{equation}

Given the above duality between the symmetrized partial trace channel and the UQCM, we can
also consider the reverse scenario. 


%
\be{thm}\label{thm:cloner-PT-recovery}
With the same notation as in Theorem~\ref{thm:PT-cloner-recovery}, the following inequality holds
\begin{multline}
D(\omega^{(k)}\Vert\pi_{\operatorname{sym}}^{d,k})\geq D(\mathcal{C}%
_{k\rightarrow n}(\omega^{(k)})\Vert\mathcal{C}_{k\rightarrow n}%
(\pi_{\operatorname{sym}}^{d,k}%
))\label{eq:refined-partial-trace-cloner-reversed}\\
+D(\omega^{(k)}\Vert(\mathcal{P}_{n\rightarrow k}\circ\mathcal{C}%
_{k\rightarrow n})(\omega^{(k)})).
\end{multline}
\e{thm}
This entropy inequality can be seen as dual to that in
\eqref{eq:refined-partial-trace-cloner}, having the following
interpretation:\ if the decrease in distinguishability of $\omega^{(k)}$ and
$\pi_{\operatorname{sym}}^{d,k}$ is small under the action of a
UQCM\ $\mathcal{C}_{k\rightarrow n}$, then the partial trace channel $\mathcal{P}%
_{n\rightarrow k}$ can perform well at recovering the original state
$\omega^{(k)}$ back from the cloned version $\mathcal{C}_{k\rightarrow
n}(\omega^{(k)})$.

There is a striking similarity between the inequalities in
\eqref{eq:refined-partial-trace-cloner} and \eqref{eq:refined-partial-trace-cloner-reversed}
and those from \cite[Sect.~III-A]{PhysRevA.93.062314}, which apply to photonic channels (cf.~\cite{5961814}).
This observation is based on the analogy that cloning is like particle amplification and partial trace is like particle loss and we discuss this further in \cite{LW16supmat}.

\textit{Restrictions on cloning in general subspaces}---We can generalize the discussion in the previous section  to arbitrary subspaces. For $1\leq k\leq n$, let $X_n$ be a $d_{X_n}$-dimensional subspace of $(\C^d)^{\otimes n}$ and let $Y_k$ be a $d_{Y_k}$-dimensional subspace of $(\C^d)^{\otimes k}$. We write $\Pi_{X_n}$, $\Pi_{Y_k}$ for the projections onto these subspaces and $\pi_{X_n}$ and $\pi_{Y_k}$ for the corresponding maximally mixed states. We generalize the definitions in  \eqref{eq:cloning-channel} and \eqref{eq:partial-trace-channel} to
\begin{align}
\label{eq:Cdefn}
\mathcal{C}_{k\rightarrow n}(\cdot)
&\equiv\frac{d_{Y_k}}{d_{X_n}}%
\Pi_{X_n}\left[  \Pi_{Y_k}
(\cdot)\Pi_{Y_k}\otimes I^{n-k}\right]
\Pi_{X_n},\\
\mathcal{P}_{n\rightarrow k}(\cdot)&\equiv\Pi_{Y_k}\operatorname{tr}_{n-k}\!\left[  \Pi_{X_n}(\cdot
)\Pi_{X_n}\right]  \Pi_{Y_k}.
\end{align}
The cloning map $\mathcal{C}_{k\rightarrow n}$ is a direct analogue of the UQCM for the specialized task of recovering a state in the subspace $X_n$ from one in the subspace $Y_k$ (previously, $X_n$ and $Y_k$ were both taken to be the symmetric subspace). By inspection, it is completely positive, and if 
$\operatorname{tr}_{n-k}[ \pi_{X_n} ]  = \pi_{Y_k} $, then it is trace preserving when acting on  any operator with support in $X_n$.
 
The same argument that proves Theorem~\ref{thm:PT-cloner-recovery} then gives
 \be{thm}\label{thm:new}
 Let $\om^{(n)}$ be a state with support in $X_n$, and suppose that $\operatorname{tr}_{n\to k}[\om^{(n)}]$ is supported in $Y_k$.
 Then
\begin{multline}
D(\omega^{(n)}\Vert\pi_{X_n})\geq  D(\mathcal{P}%
_{n\rightarrow k}(\omega^{(n)})\Vert
\pi_{Y_k})\label{eq:DP-form-general}\\
+ D(\omega^{(n)}\Vert(\mathcal{C}_{k\rightarrow n}\circ\mathcal{P}%
_{n\rightarrow k})(\omega^{(n)})).
\end{multline}
\e{thm}

The assumption that $\mathrm{tr}_{n\to k}[\om^{(n)}]$ is supported in $Y_k$ is made for convenience. Without it, the quantity $\mathrm{tr}[\curly{P}_{n\to k}(\om^{(n)})]<1$ would enter in the statement, cf.\ \cite{LW16supmat}. We can obtain a stronger statement under the additional assumption $\operatorname{tr}_{n-k}[ \pi_{X_n} ]  = \pi_{Y_k}$: It implies $\mathcal{P}%
_{n\rightarrow k}(\pi_{X_n})=\pi_{Y_k}$ and that $(\mathcal{C}_{k\rightarrow n}\circ\mathcal{P}%
_{n\rightarrow k})(\omega^{(n)})$ has trace one. 

Theorem \ref{thm:new} controls the performance of the cloning machine $\mathcal{C}_{k\rightarrow n}$ \eqref{eq:Cdefn} in recovering from a loss of $n-k$ particles when \emph{a priori information} about the states is given (in the sense that we know on which subspaces they are supported). To see this, consider, e.g., the case of perfect a priori information when $\dim X_n=1$. Then $D(\omega^{(n)}\Vert\pi_{X_n})=0$ and so \eqref{eq:DP-form-general} implies that the cloning is perfect, $\omega^{(n)}=(\mathcal{C}_{k\rightarrow n}\circ\mathcal{P}%
_{n\rightarrow k})(\omega^{(n)})$.

For non-trivial applications of Theorem \ref{thm:new}, a natural class of subspaces to consider are those associated to irreducible group representations, e.g.\ of the permutation group acting on $(\C^d)^{\otimes n}$. To avoid introducing the representation-theoretic background, we focus here on the case when both $X_n$ and $Y_k$ are taken to be the familiar \emph{antisymmetric} subspace. Physically, the antisymmetric subspace describes fermions and therefore our results have bearing on electronic analogues of the photonic scenarios mentioned above.

 For this part, we let $d\geq n$. An example system for which $d$ can be larger than $n$ is a tight-binding model on $d$ lattice sites, where each site can host a single electron. The antisymmetric subspace $X_n$ has dimension $d_{X_n}= \binom{d}{n}$. The analogue of a tensor-power pure state in the antisymmetric subspace is a \emph{Slater determinant}
 $\vert \Phi_n \rangle \equiv \vert \phi_1 \rangle \wedge \cdots\wedge \vert \phi_n
 \rangle$, where the states $\{ \vert \phi_i \rangle \}_i$ are orthonormal. \cite{LW16supmat} reviews background and how
the marginal $\operatorname{tr}_{n\to k}[\Phi_n]$ is again antisymmetric and has quantum entropy $\log\binom{n}{k}$. Thus,  \eqref{eq:DP-form-general} of Theorem~\ref{thm:new} applies to establish the first inequality of the following:
\beq
\begin{aligned}
\log \binom{d-k}{d-n}
& = -\log\!\l(\binom{d}{k}\cdot \l[\binom{n}{k}\binom{d}{n}\r]^{-1}\r)
\\
&\geq D(\Phi_n\Vert(\mathcal{C}_{k\rightarrow n}\circ\mathcal{P}%
_{n\rightarrow k})(\Phi_n)).
\end{aligned}
\eeq
Using $D\geq -\log F$ again, we conclude that the performance of the antisymmetric cloning machine $\mathcal{C}_{k\rightarrow n}$ in recovering from a loss of $n-k$ fermionic particles is controlled by
\begin{equation}
F(\Phi_n,\, (\mathcal{C}_{k\rightarrow n}\circ\mathcal{P}%
_{n\rightarrow k})(\Phi_n))\geq 
\l[\binom{d-k}{d-n}\r]^{-1}.
\end{equation}
We mention that $(\mathcal{C}_{k\rightarrow n}\circ\mathcal{P}%
_{n\rightarrow k})(\Phi_n)$ has trace one; this follows from the identity $\operatorname{tr}_{n-k}[ \pi_{X_n} ]  = \pi_{Y_k}$ for the antisymmetric subspace (cf.~Lemma 12 in \cite{LW16supmat}). We also mention that the standard symmetric UQCM would produce the zero state in this case and thus yields a (minimal) fidelity of zero.

\textit{General restrictions on approximate broadcasts}---As the introduction mentioned, our methods imply new information-theoretic restrictions on any approximate two-fold broadcast. These are relegated to \cite{LW16supmat}.

\textit{Conclusion}---In this paper, we have proven several entropic inequalities
that pose limitations on the kinds of approximate clonings / broadcasts that are allowed in
quantum information processing. Some of the results generalize the well known no-cloning and no-broadcasting results, restated in Theorems~\ref{thm:NC} and \ref{thm:UBC}. Other results demonstrate how universal cloning machines and partial trace channels are dual to each other, in the sense that one can be used as an approximate recovery channel for the other, with a performance controlled by entropy inequalities. We can also control the performance of an analogue of the UQCM for cloning between any two subspaces. In particular, we obtain bounds on its performance in recovering from a loss of $n-k$ fermionic particles. 

\acknowledgments{
We acknowledge discussions with Sourav Chatterjee and Kaushik Seshadreesan and helpful comments by an anonymous referee. After completing the results of this paper, we learned of the related and concurrent work of 
Marvian and Lloyd \cite{ML16}. We are grateful to them for passing their manuscript along to us.
M.M.W.\ acknowledges support
from the NSF under Award No. 1350397.
}

\onecolumngrid

\pagebreak

\appendix

\section{Monotonicity of the relative entropy and recoverability}

We recall the lower bound from \cite{Jungeetal} on the decrease of the relative entropy for a channel
$\mathcal{N}$ and states $\rho$ and $\sigma$:
\be{thm}[\cite{Jungeetal}]
\label{thm:monotonicity-general}
Let $\beta(t):=\frac{\pi}{2}(1+\cosh(\pi t))^{-1}$. For any two quantum states $\rho,\sigma$ and a channel
$\mathcal{N}$, the following bound holds
\begin{equation}
D(\rho\| \sigma)\geq D(\mathcal{N}(\rho)\| \mathcal{N}(\sigma))
-\int_\R \log F\!\l(\rho, \curly{R}^t_{\mathcal{N},\sigma}(\mathcal{N}(\rho))	\r) \,\d\beta(t), \nonumber
\end{equation}
where the rotated Petz recovery map $\curly{R}^t_{\mathcal{N},\sigma}$ is defined as
$$
\curly{R}^t_{\mathcal{N},\sigma}(\cdot)
:= \sigma^{(1+it)/2} \mathcal{N}^\dag \l[
(\mathcal{N}(\sigma))^{-(1+it)/2} (\cdot)
(\mathcal{N}(\sigma))^{-(1-it)/2}\r]
\sigma^{(1-it)/2},
$$
where $\mathcal{N}^\dag$ is the completely positive, unital adjoint of the channel $\mathcal{N}$. Every rotated Petz recovery map perfectly recovers $\sigma$
from $\mathcal{N}(\sigma)$:
$$
\curly{R}^t_{\mathcal{N},\sigma}(\mathcal{N}(\sigma))
=\sigma.
$$
\e{thm}

In the special case when the applied quantum channel is the partial trace, the inequality becomes as follows:

\be{thm}[\cite{Jungeetal}]
\label{thm:monotonicity}
Let $\beta(t):=\frac{\pi}{2}(1+\cosh(\pi t))^{-1}$. For any two quantum states $\rho_{AB},\sigma_{AB}$, we have
\begin{equation}
D(\rho_{AB}\| \sigma_{AB})\geq
D(\rho_B\| \sigma_B)
-\int_\R \log F\!\l(\rho_{AB}, \curly{R}^t_{A,\sigma}(\rho_B)	\r) \,\d\beta(t), \nonumber
\end{equation}
where the rotated Petz recovery map $\curly{R}^t_{A,X}$ is defined in \eqref{eq:rPetz}.
\e{thm}


\section{A generalization of Theorem~\ref{thm:mainNCnew} to $k$ to $n$ cloning}
\be{thm}
\label{thm:mainNCnewgeneral}
Consider the more general situation in which we begin with $k\leq n$ tensor-product copies of the state $\sigma_i$ for $i\in \{1,2\}$, and suppose that the channel $\Lambda_{A_1 \cdots A_k \to A_1 \cdots A_n}$ approximately broadcasts $\sigma_1$, in the sense that
$$
\operatorname{tr}_{A_1 \cdots A_n \backslash A_j}[\Lambda_{A_1 \cdots A_k \to A_1 \cdots A_n}(\sigma_1^{\otimes k})] = \tilde\sigma_1,
$$
and approximately clones $\sigma_2$, in the sense that
$$
\Lambda_{A_1 \cdots A_k \to A_1 \cdots A_n}(\sigma_2^{\otimes k}) = \tilde\sigma_2^{\otimes n}.
$$
Then, for every 
$m\in\{1,\ldots,n\}$, there exists a recovery channel $\mathcal{R}_{A_{1}\cdots
A_{m}\rightarrow A_1 \cdots A_k}^{(m,k)}$ such that%
\beq
kD(\sigma_{1}\Vert\sigma_{2})-mD(\tilde{\sigma}_{1}\Vert\tilde{\sigma}_{2}%
)\geq
-\log F(\sigma_{1},(\mathcal{R}_{A_{1}\cdots A_{m}\rightarrow A_1 \cdots A_k}^{(m,k)}%
\circ\operatorname{tr}_{A_{m+1}\cdots A_{n}}\circ\Lambda)(\sigma_{1}^{\otimes k})),\nonumber
\eeq
and the recovery channel $\mathcal{R}_{A_{1}\cdots A_{m}\rightarrow A_1 \cdots A_k}^{(m,k)}$
satisfies $$\sigma_{2}^{\otimes k}=\mathcal{R}_{A_{1}\cdots A_{m}\rightarrow A_1 \cdots A_k}^{(m,k)}(\tilde{\sigma}_{2}^{\otimes m}).$$
\e{thm}

This can be proved by the same method as for Theorem~\ref{thm:mainNCnew} (see below).

\section{On photon amplification and loss}
Here we discuss the analogy between \eqref{eq:refined-partial-trace-cloner} and \eqref{eq:refined-partial-trace-cloner-reversed}
and the inequalities from Section III-A of \cite{PhysRevA.93.062314}. 
The partial trace channel is like particle loss, which for photons is represented
by a pure-loss channel $\mathcal{L}_{\eta}$ with transmissivity $\eta
\in\left[  0,1\right]  $. Furthermore, a UQCM is like particle
amplification, which for bosons is represented by an amplifier channel
$\mathcal{A}_{G}$\ of gain $G\geq1$. Let $\theta_{E}$ denote a thermal state
of mean photon number $E\geq0$, and let $\rho$ denote a state of the same
energy $E$. A slight rewriting of the inequalities from Section III-A of \cite{PhysRevA.93.062314}, given below, results
in the following:%
\begin{align}
D(\rho\Vert\theta_{E})& \gtrsim D(\mathcal{L}_{\eta}(\rho)\Vert\mathcal{L}%
_{\eta}(\theta_{E}))\nonumber\\
& \qquad\qquad+D(\rho\Vert(\mathcal{A}_{1/\eta}\circ\mathcal{L}_{\eta})(\rho
)),\label{eq:loss-ch}\\
D(\rho\Vert\theta_{E})&\geq D(\mathcal{A}_{G}(\rho)\Vert\mathcal{A}%
_{G}(\theta_{E}))\nonumber\\
& \qquad\qquad+D(\rho\Vert(\mathcal{L}_{1/G}\circ\mathcal{A}_{G})(\rho)),\label{eq:amp-ch}%
\end{align}
where the symbol $\gtrsim$ indicates that the entropy inequality holds up to a
term with magnitude no larger than $\log(1/\eta)$ and which approaches zero as $E\rightarrow\infty$. So
we see that \eqref{eq:loss-ch} is analogous to
\eqref{eq:refined-partial-trace-cloner}:\ under a particle loss $\mathcal{L}%
_{\eta}$, we can apply a particle amplification procedure $\mathcal{A}%
_{1/\eta}$ to try and recover the lost particles, with a performance controlled
by \eqref{eq:loss-ch}. Similarly, \eqref{eq:amp-ch} is analogous to
\eqref{eq:refined-partial-trace-cloner-reversed}: under a particle
amplification $\mathcal{A}_{G}$, we can apply a particle loss channel
$\mathcal{L}_{1/G}$ to try and recover the original state, with a performance
controlled by \eqref{eq:amp-ch}. Observe that the parameters specifying the recovery
channels are directly related to the parameters of the original channels, just
as is the case in
\eqref{eq:refined-partial-trace-cloner} and \eqref{eq:refined-partial-trace-cloner-reversed}.
Note that an explicit connection between cloning and amplifier channels was
established in \cite{5961814}, and our result serves to complement that connection.

\be{proof}[Proof of \eqref{eq:loss-ch} and 
\eqref{eq:amp-ch}]
A proof of \eqref{eq:loss-ch} is as follows. The Hamiltonian here is $a^{\dag
}a$, which is the photon number operator. Let $\rho$ be a state of energy $E$,
and let $\theta_{E}$ be a thermal state of energy $E$ (i.e., $\left\langle
a^{\dag}a\right\rangle _{\rho}=\left\langle a^{\dag}a\right\rangle
_{\theta_{E}}=E$). Under the action of a pure-loss channel $\mathcal{L}_{\eta
}$, the energies of $\mathcal{L}_{\eta}(\rho)$ and $\mathcal{L}_{\eta}%
(\theta_{E})$ are equal to $\eta E$, and we also find that $\mathcal{L}_{\eta
}(\theta_{E})=\theta_{\eta E}$. Furthermore, a standard calculation gives that
$-\operatorname{tr}[\rho\log\theta_{E}]=H(\theta_{E})=g(E):=\left(
E+1\right)  \log\left(  E+1\right)  -E\log E$. Putting this together, we find
that%
\begin{align}
D(\rho\Vert\theta_{E})-D(\mathcal{L}_{\eta}(\rho)\Vert\mathcal{L}_{\eta
}(\theta_{E}))
 &=H(\mathcal{L}_{\eta}(\rho))-H(\rho)+g(E)-g(\eta E)\\
& \geq D(\rho\Vert(\mathcal{A}_{1/\eta}\circ\mathcal{L}_{\eta})(\rho
))-\log(1/\eta)+g(E)-g(\eta E).
\end{align}
The first equality is a rewriting using what we mentioned above and the
inequality follows from Section III-A of \cite{PhysRevA.93.062314}. When
$E=0$, $g(E)-g(\eta E)=0$ also. As $E$ gets larger, $g(E)-g(\eta E)$ is
monotone increasing and reaches its maximum of $\log(1/\eta)$ as
$E\rightarrow\infty$.

The other inequality in \eqref{eq:amp-ch} for an amplifier channel follows similarly. Under the
action of an amplifier channel $\mathcal{A}_{G}$, the energies of
$\mathcal{A}_{G}(\rho)$ and $\mathcal{A}_{G}(\theta_{E})$ are $GE$. We also
find that $\mathcal{A}_{G}(\theta_{E})=\theta_{GE}$. Proceeding as above, we
find that%
\begin{align}
 D(\rho\Vert\theta_{E})-D(\mathcal{A}_{G}(\rho)\Vert\mathcal{A}_{G}%
(\theta_{E}))
& =H(\mathcal{A}_{G}(\rho))-H(\rho)+g(E)-g(GE)\\
& \geq D(\rho\Vert(\mathcal{L}_{1/G}\circ\mathcal{A}_{G})(\rho))+\log
G-\left[  g(GE)-g(E)\right]  \\
& \geq D(\rho\Vert(\mathcal{L}_{1/G}\circ\mathcal{A}_{G})(\rho)).
\end{align}
The first equality is a rewriting and the inequality follows from Section
III-A of \cite{PhysRevA.93.062314}. The last inequality follows because
$g(GE)-g(E)=0$ at $E=0$, and it is monotone increasing as a function of $E$,
reaching its maximum value of $\log G$ as $E\rightarrow\infty$.
\e{proof}

\section{Proofs of the main results}

\be{proof}[Proof of Theorems \ref{thm:mainNC} and \ref{thm:mainNCnew}]
Theorem~\ref{thm:mainNC} follows from the $m=n$ case of Theorem~\ref{thm:mainNCnew}. Hence, it suffices to prove Theorem~\ref{thm:mainNCnew}. 
We start by noting the following general inequality
holding for states $\omega$ and $\tau$, a channel $\mathcal{N}$, and a
recovery channel $\mathcal{R}$:%
\begin{align}
D(\omega\Vert\tau)-D(\mathcal{N}(\omega)\Vert\mathcal{N}(\tau)) &  \geq-\log
F(\omega,(\mathcal{R}\circ\mathcal{N})(\omega)),\label{eq:junge-et-al-1}\\
\tau &  =(\mathcal{R}\circ\mathcal{N})(\tau),\label{eq:junge-et-al-2}%
\end{align}
which is a consequence of convexity of $-\log$ and the fidelity applied to Theorem~\ref{thm:monotonicity-general}, taking 
\beq
\label{eq:Rdefn}
\mathcal{R} :=
\int_{\mathbb{R}} \curly{R}^t_{\mathcal{N},\tau} \d \beta(t)
\eeq
with $\curly{R}^t_{\mathcal{N},\tau}$ as in Theorem~\ref{thm:monotonicity-general}.
To get the inequality, we take $\omega=\sigma_{1}$, $\tau=\sigma_{2}$,
and $\mathcal{N}=\operatorname{tr}_{A_{m+1}\cdots A_{n}}\circ\Lambda$.
This then gives the inequality%
\begin{equation}
D(\sigma_{1}\Vert\sigma_{2})-D((\operatorname{tr}_{A_{m+1}\cdots A_{n}}%
\circ\Lambda)(\sigma_{1})\Vert(\operatorname{tr}_{A_{m+1}\cdots A_{n}}%
\circ\Lambda)(\sigma_{2}))\geq-\log F(\sigma_{1},(\mathcal{R}_{A_{1}\cdots
A_{n}\rightarrow A}^{(m)}\circ\operatorname{tr}_{A_{m+1}\cdots A_{n}}%
\circ\Lambda)(\sigma_{1})),
\end{equation}
where the recovery channel $\mathcal{R}_{A_{1}\cdots A_{n}\rightarrow A}%
^{(m)}$ satisfies%
\begin{align}
\sigma_{2}  & =(\mathcal{R}_{A_{1}\cdots A_{n}\rightarrow A}^{(m)}%
\circ\operatorname{tr}_{A_{m+1}\cdots A_{n}}\circ\Lambda)(\sigma_{2})
 =\mathcal{R}_{A_{1}\cdots A_{n}\rightarrow A}^{(m)}(\tilde{\sigma}%
_{2}^{\otimes m}).
\end{align}
So then we prove that $-D((\operatorname{tr}_{A_{m+1}\cdots A_{n}}\circ
\Lambda)(\sigma_{1})\Vert(\operatorname{tr}_{A_{m+1}\cdots A_{n}}\circ
\Lambda)(\sigma_{2}))\leq-mD(\tilde{\sigma}_{1}\Vert\tilde{\sigma}_{2})$. We apply $\log(X\otimes Y)=\log X\otimes I+I\otimes  \log Y$ and set $H(X):=-\tr{X\log X}$ to get
\begin{align}
& \!\!\!\!\!\!\!\!\!\! -D((\operatorname{tr}_{A_{m+1}\cdots A_{n}}\circ\Lambda)(\sigma_{1}%
)\Vert(\operatorname{tr}_{A_{m+1}\cdots A_{n}}\circ\Lambda)(\sigma_{2})) \nonumber\\
&  =-D(\rho_{1,A_{1}\cdots A_{m}}^{\operatorname{out}}\Vert\tilde{\sigma
}_{2,A_{1}}\otimes\cdots\otimes\tilde{\sigma}_{2,A_{m}})\\
&  =H(\rho_{1,A_{1}\cdots A_{m}}^{\operatorname{out}})+\operatorname{tr}%
[\rho_{1,A_{1}\cdots A_{m}}^{\operatorname{out}}\log(\tilde{\sigma}_{2,A_{1}%
}\otimes\cdots\otimes\tilde{\sigma}_{2,A_{m}})]\\
&  =H(\rho_{1,A_{1}\cdots A_{m}}^{\operatorname{out}})+\sum_{k=1}%
^{m}\operatorname{tr}[\rho_{1,A_{1}\cdots A_{m}}^{\operatorname{out}}%
(I_{A_{1}\cdots A_{m}\backslash A_{k}}\otimes\log(\tilde{\sigma}_{2,A_{k}}))]
\end{align}
Recall our assumption from \eqref{eq:approxCB} that the channel broadcasts $\sigma_1$ to $\tilde\sigma_1$. It gives
\begin{align}
&\!\!\!\!\!\!\!\!H(\rho_{1,A_{1}\cdots A_{m}}^{\operatorname{out}})+\sum_{k=1}%
^{m}\operatorname{tr}[\rho_{1,A_{1}\cdots A_{m}}^{\operatorname{out}}
(I_{A_{1}\cdots A_{m}\backslash A_{k}}\otimes\log(\tilde{\sigma}_{2,A_{k}}))] \nonumber \\
&  =H(\rho_{1,A_{1}\cdots A_{m}}^{\operatorname{out}})+\sum_{k=1}%
^{m}\operatorname{tr}[\tilde{\sigma}_{1}\log\tilde{\sigma}_{2}]\\
&  \leq\sum_{k=1}^{m}\left[  H(\rho_{1,A_{k}}^{\operatorname{out}%
})+\operatorname{tr}[\tilde{\sigma}_{1}\log\tilde{\sigma}_{2}]\right]  \\
&  =-mD(\tilde{\sigma}_{1}\Vert\tilde{\sigma}_{2}).
\end{align}
In the second-to-last step, we used the subadditivity of the entropy $H$ and again \eqref{eq:approxCB}. 
\e{proof}

\be{proof}[Proof of Theorem~\ref{thm:PT-cloner-recovery}.]

We observe that
$\pi_{\operatorname{sym}}^{d,k}=\operatorname{tr}_{n-k}[\pi
_{\operatorname{sym}}^{d,n}]$ which follows easily from the representation
$\pi_{\operatorname{sym}}^{d,n}=\int d\psi\ \psi^{\otimes n}$ \cite{H13}, the
integral being with respect to the Haar probability measure over pure states
$\psi$.

A proof of \eqref{eq:refined-partial-trace-cloner}\ then follows from a few key
steps:%
\begin{align}
 & D(\omega^{(n)}\Vert\pi_{\operatorname{sym}}^{d,n})-D(\mathcal{P}%
_{n\rightarrow k}(\omega^{(n)})\Vert\mathcal{P}_{n\rightarrow k}%
(\pi_{\operatorname{sym}}^{d,n}))\nonumber \\
  =&-H(\omega^{(n)})-\operatorname{tr}[\omega^{(n)}\log\pi_{\operatorname{sym}%
}^{d,n}]+H(\mathcal{P}_{n\rightarrow k}(\omega^{(n)}))+\operatorname{tr}%
[\mathcal{P}
_{n\rightarrow k}(\omega^{(n)})\log\pi_{\operatorname{sym}}^{d,k}]\nonumber\\
 =& H(\mathcal{P}_{n\rightarrow k}(\omega^{(n)}))-H(\omega^{(n)}%
)-\log(d[k]/d[n])\nonumber\\
 \geq& D(\omega^{(n)}\Vert(\mathcal{P}_{n\rightarrow k}^{\dag}\circ
\mathcal{P}_{n\rightarrow k})(\omega^{(n)}))-\log(d[k]/d[n])\nonumber\\
 =& D(\omega^{(n)}\Vert(\mathcal{C}_{k\rightarrow n}\circ\mathcal{P}%
_{n\rightarrow k})(\omega^{(n)}%
)).\label{eq:proof-cloning-partial-trace-recovery}%
\end{align}
The first equality holds by definition of quantum relative entropy and in the second equality we used the fact that
 $\mathrm{tr}[\mathcal{P}_{n\rightarrow k}(\omega^{(n)})]=\mathrm{tr}[\mathrm{tr}_{n\to k}(\omega^{(n)})]=\mathrm{tr}[\om^{(n)}]=1$, wherein the first step holds because $\mathrm{tr}_{n\to k}[\om^{(n)}]$ is supported in the symmetric subspace. The inequality above is a consequence of \cite[Thm.~1]{PhysRevA.93.062314}
which states that%
\begin{equation}
H(\mathcal{N}(\rho))-H(\rho)\geq D(\rho\Vert(\mathcal{N}^{\dag}\circ
\mathcal{N})(\rho))
\end{equation}
for any state $\rho$ and positive, trace-preserving map $\mathcal{N}$. (We remark that $\curly{P}_{n\to k}$ is indeed trace-preserving when considered as a map on states supported on the symmetric subspace.) The last equality in \eqref{eq:proof-cloning-partial-trace-recovery} follows
from the property of relative entropy that $D(\xi\Vert
\tau)-\log c=D(\xi\Vert c\tau)$ for states $\xi,\tau$ and $c>0$.
\e{proof}

Essentially the same argument, with minor modifications, also proves Theorems\ref{thm:cloner-PT-recovery} and \ref{thm:new}. For the former, we use the facts that
$\mathcal{C}_{k\rightarrow n}(\pi_{\operatorname{sym}}^{d,k})=\pi
_{\operatorname{sym}}^{d,n}$ and that $\mathcal{C}_{k\rightarrow n}$ is trace-preserving when acting on states supported in the symmmetric subspace.
For Theorem \ref{thm:new}, we use the assumption that $\mathrm{tr}_{n\to k}[\om^{(n)}]$ is supported in $Y_k$ to get $\mathrm{tr}[\mathcal{P}%
_{n\rightarrow k}(\omega^{(n)})]=1$. The details are left to the reader.

We close this proof section with a remark on a so-far implicit assumption.

\be{rmk}[Non-identical marginals case]
Some of our results, Theorems~\ref{thm:mainNC}, \ref{thm:mainNCnew} and \ref{thm:mainBC} (see below), apply to approximate clonings/broadcasts in the sense of Definition 3. That is, we always assume that the marginals of the output state are identical, i.e.
\beq
\label{eq:non-identical}
\rho^{\mathrm{out}}_{i,A_1}=\ldots=\rho^{\mathrm{out}}_{i,A_n}=\tilde\sigma_i, \qquad (i=1,2).
\eeq
We make this assumption for two reasons: (a) It simplifies the bounds in our main results and (b) we believe that it is a natural assumption for approximate cloning/broadcasting. However, the methods apply more generally and they also yield limitations on approximate clonings/broadcasts when \eqref{eq:non-identical} is not satisfied.
\e{rmk}

\section{The maximally mixed state on the antisymmetric subspace}

The following lemma allows us to conclude that the stronger form of Theorem~\ref{thm:new} applies when considering cloning maps for the antisymmetric subspace.

\be{lm}\label{lm:pi}
Let $\curly{H}_n$ denote the antisymmetric subspace of $n$ qudits and let $\pi_n$ denote the maximally mixed state on $\curly{H}_n$. Then
$$
\pi_k=\trr_{n\to k}[\pi_n].
$$
\e{lm}

\be{proof}[Proof of Lemma \ref{lm:pi}]
The operator $\trr_{n\to k}[\pi_n]$ is supported on $\curly{H}_k$. It also commutes with all unitaries $U_k$ on $\curly{H}_k$. Indeed, by properties of the partial trace and the fact that $\pi_n$ commutes with all unitaries on $\curly{H}_n$,
$$
U_k\trr_{n\to k}[\pi_n]=\trr_{n\to k}[(U_k\otimes I_{\curly{H}_{n-k}})\pi_n]=\trr_{n\to k}[\pi_n (U_k\otimes I_{\curly{H}_{n-k}})]=\trr_{n\to k}[\pi_n]U_k.
$$
Since it commutes with all unitaries, $\trr_{n\to k}[\pi_n]$ is proportional to $I_{\curly{H}_k}$. Since 
$$
\trr_{\curly{H}_k}[\trr_{n\to k}[\pi_n]]=\trr_{\curly{H}_n}[\pi_n]=1,
$$ the proportionality constant must be $1/\mathrm{\dim}{\curly{H}_k}=1/\binom{d}{k}$. This proves the lemma.
\e{proof} 

\section{Reductions of Slater determinants and their quantum entropy}

\label{sec:reductions-Slater-dets}

Here we prove the fact that the quantum entropy of the marginal
$\mathrm{tr}_{n\rightarrow k}[\Phi_{n}]$ is $\log\binom{n}{k}$ when $\Phi_{n}$
is a Slater determinant. We can conclude this directly from the expression
\eqref{eq:marginal} for the marginal derived below.

Before beginning, let us suppose that $\{|\phi_{j}\rangle\}_{j=1}^{d}$ is an
orthonormal basis for a $d$-dimensional Hilbert space $\mathcal{H}$. Letting
$d\geq n$, a Slater determinant state $\Phi_{n}$\ corresponding to this basis
and a subset $\{1,\ldots,n\}$ is as follows:%
\begin{align}
|\Phi_{n}\rangle &  :=|\phi_{1}\rangle\wedge\cdots\wedge|\phi_{n}\rangle\\
&  :=\frac{1}{\sqrt{n!}}\sum_{\pi\in S_{n}}\mathrm{sgn}(\pi)|\phi_{\pi
(1)}\rangle\otimes\cdots\otimes|\phi_{\pi(n)}\rangle,
\end{align}
where $S_{n}$ is the set of all permutations of $\{1,\ldots,n\}$ and $\mathrm{sgn}%
(\pi)$ denotes its signum. Note that we chose the subset $\left\{
1,\ldots,n\right\}  $ of $\{1,\ldots,d\}$, but without loss of generality we
could have chosen an arbitrary one.

The formula \eqref{eq:marginal} below is presumably well known. We include an
elementary, but slightly tedious, proof for completeness.

\begin{lm}
[Marginal of a Slater determinant]Let $d \geq n$ and $|\Phi_{n}\rangle=|\phi_{1}\rangle
\wedge\cdots\wedge|\phi_{n}\rangle$, with $\{|\phi_{j}\rangle\}_{j=1}^{d}$ an orthonormal basis. A $k$-set
$A_{k}$ is a subset of $\{1,\ldots,n\}$ consisting of exactly $k$ elements.
For any $k$-set $A_{k}=\{i_{1},\ldots,i_{k}\}$, we define
\begin{equation}
|\Phi_{A_{k}}\rangle\langle\Phi_{A_k}|:=(|\phi_{i_{1}}\rangle\wedge\cdots\wedge|\phi_{i_{k}%
}\rangle)(\langle\phi_{i_{1}}|\wedge\cdots\wedge|\phi_{i_{k}%
}|).
\end{equation}
Then%
\begin{equation}
\mathrm{tr}_{n\rightarrow k}[|\Phi_{n}\rangle\langle\Phi_{n}|]=\frac{1}%
{\binom{n}{k}}\sum_{A_{k}\ k\mathrm{-set}}|\Phi_{A_{k}}\rangle\langle
\Phi_{A_{k}}|. \label{eq:marginal}%
\end{equation}
The orthonormality of the states $\{|\Phi_{A_{k}}\rangle\}$ for fixed $k$ then implies that $H(\mathrm{tr}_{n\rightarrow k}|\Phi_{n}\rangle\langle
\Phi_{n}|)=\log\binom{n}{k}$, where $H(\rho)=-\mathrm{tr}[\rho\log\rho]$ is the
quantum entropy.
\end{lm}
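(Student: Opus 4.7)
The plan is to expand both the ket and bra, perform the partial trace explicitly, and then regroup the resulting sum by the underlying $k$-subset of basis indices.

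First, I write
\[
|\Phi_n\rangle\langle\Phi_n| = \frac{1}{n!}\sum_{\pi,\tau\in S_n}\mathrm{sgn}(\pi)\mathrm{sgn}(\tau)\bigotimes_{j=1}^{n}|\phi_{\pi(j)}\rangle\langle\phi_{\tau(j)}|.
\]
Tracing out the last $n-k$ tensor factors and using orthonormality of $\{|\phi_i\rangle\}$ produces factors $\prod_{j>k}\langle\phi_{\tau(j)}|\phi_{\pi(j)}\rangle = \prod_{j>k}\delta_{\tau(j),\pi(j)}$. So only pairs $(\pi,\tau)$ with $\tau(j)=\pi(j)$ for all $j>k$ survive. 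Equivalently, $\sigma := \tau^{-1}\pi$ fixes $\{k+1,\dots,n\}$ pointwise, i.e.\ $\sigma\in S_k$ (viewed as acting on $\{1,\dots,k\}$). Writing $\tau=\pi\sigma^{-1}$ and noting $\mathrm{sgn}(\pi)\mathrm{sgn}(\tau)=\mathrm{sgn}(\sigma)$, I obtain
\[
\mathrm{tr}_{n\to k}[|\Phi_n\rangle\langle\Phi_n|] = \frac{1}{n!}\sum_{\pi\in S_n}\sum_{\sigma\in S_k}\mathrm{sgn}(\sigma)\bigotimes_{j=1}^{k}|\phi_{\pi(j)}\rangle\langle\phi_{\pi(\sigma^{-1}(j))}|.
\]

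Next I would group the outer sum by the $k$-set $A_k=\{\pi(1),\dots,\pi(k)\}$. For each fixed $A_k$, the number of choices of $\pi|_{\{k+1,\dots,n\}}$ (an ordering of the complement) is $(n-k)!$, while $\pi|_{\{1,\dots,k\}}$ ranges over all orderings $\alpha$ of $A_k$. Fix one reference ordering $\alpha_0$ of $A_k$ and write $\alpha=\alpha_0\circ\rho$ with $\rho\in S_k$. Then a substitution $\mu=\rho\sigma^{-1}$ turns the double sum into
\[
\sum_{\rho,\mu\in S_k}\mathrm{sgn}(\rho)\mathrm{sgn}(\mu)\bigotimes_{j=1}^{k}|\phi_{\alpha_0(\rho(j))}\rangle\langle\phi_{\alpha_0(\mu(j))}| = k!\,|\Phi_{A_k}\rangle\langle\Phi_{A_k}|,
\]
by the very definition of the Slater determinant on $A_k$. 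Assembling the constants yields
\[
\mathrm{tr}_{n\to k}[|\Phi_n\rangle\langle\Phi_n|] = \frac{(n-k)!\,k!}{n!}\sum_{A_k}|\Phi_{A_k}\rangle\langle\Phi_{A_k}|=\frac{1}{\binom{n}{k}}\sum_{A_k}|\Phi_{A_k}\rangle\langle\Phi_{A_k}|,
\]
which is \eqref{eq:marginal}.

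For the entropy statement, it suffices to note that the $\binom{n}{k}$ states $\{|\Phi_{A_k}\rangle\}$ are mutually orthonormal: each $|\Phi_{A_k}\rangle$ is a unit vector in the antisymmetric subspace, and two such states with $A_k\neq A_k'$ have inner product proportional to $\det[\langle\phi_{i_a}|\phi_{i_b'}\rangle]$, which vanishes since $\{|\phi_i\rangle\}$ is orthonormal and the two index sets differ. Therefore \eqref{eq:marginal} is already the spectral decomposition, and $H(\mathrm{tr}_{n\to k}|\Phi_n\rangle\langle\Phi_n|)=\log\binom{n}{k}$.

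The only real obstacle is bookkeeping: carefully tracking the signs under the substitution $\tau\mapsto\pi\sigma^{-1}$ and then $\alpha\mapsto\alpha_0\rho$, and counting how many $\pi\in S_n$ correspond to a given $A_k$ and a given $\alpha$. Nothing here is conceptually subtle, but the parametrization must be kept consistent to land on the factor $\frac{(n-k)!\,k!}{n!}$ rather than an erroneous one.
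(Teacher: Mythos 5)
Your proof is correct and follows essentially the same route as the paper's: expand the projector as a double sum over $S_n$, let orthonormality turn the traced-out factors into delta functions forcing the two permutations to agree on $\{k+1,\dots,n\}$, and regroup by the $k$-set $A_k$ to pick up the factor $(n-k)!\,k!/n!=1/\binom{n}{k}$ and recognize the Slater projector on $A_k$. Your substitution $\sigma=\tau^{-1}\pi$ is a slightly cleaner way of organizing the same bookkeeping that the paper handles via its order-preserving bijections $f_{A_k}$ and the factorization $\pi=f_{A_k}\circ(\pi^k,\pi^{n-k})$, and the orthogonality argument for the entropy claim matches as well.
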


\begin{proof}
By definition of the wedge product, we can write $|\Phi_{n}\rangle\langle
\Phi_{n}|$ as%
\begin{equation}
|\Phi_{n}\rangle\langle\Phi_{n}|=\frac{1}{n!}\sum_{\pi,\sigma\in S_{n}%
}\mathrm{sgn}(\pi)\mathrm{sgn}(\sigma)|\phi_{\pi(1)}\rangle\langle\phi
_{\sigma(1)}|\otimes\cdots\otimes|\phi_{\pi(n)}\rangle\langle\phi_{\sigma
(n)}|.
\end{equation}
Taking the partial trace over the last $n-k$ systems yields the following:%
\begin{align}
&  \mathrm{tr}_{n\rightarrow k}[|\Phi_{n}\rangle\langle\Phi_{n}|]\nonumber\\
&  =\frac{1}{n!}\sum_{\pi,\sigma\in S_{n}}\mathrm{sgn}(\pi)\mathrm{sgn}%
(\sigma)|\phi_{\pi(1)}\rangle\langle\phi_{\sigma(1)}|\otimes\cdots\otimes
|\phi_{\pi(k)}\rangle\langle\phi_{\sigma(k)}|\ \langle\phi_{\pi(k+1)}%
|\phi_{\sigma(k+1)}\rangle\cdots\langle\phi_{\pi(n)}|\phi_{\sigma(k)}%
\rangle\label{eq:permutation}\\
&  =\frac{1}{n!}\sum_{\pi,\sigma\in S_{n}}\mathrm{sgn}(\pi)\mathrm{sgn}%
(\sigma)|\phi_{\pi(1)}\rangle\langle\phi_{\sigma(1)}|\otimes\cdots\otimes
|\phi_{\pi(k)}\rangle\langle\phi_{\sigma(k)}|\ \delta_{\pi(k+1),\sigma
(k+1)}\cdots\delta_{\pi(n),\sigma(n)}.
\end{align}
In the second equality, we used orthonormality. The product of delta functions
implies that we only need to consider permutations $\pi$ and $\sigma$ which
agree on $\{k+1,\ldots,n\}$.

To exploit this, we partition the permutations according to which $k$-set
$A_{k}$ features as the image of $\{1,\ldots,k\}$. More precisely, given a $k
$-set $A_{k}$, we define
\begin{equation}
S_{n}(A_{k}):=\left\{  \pi\in S_{n}\;:\;\pi(\{1,\ldots,k\})=A_{k}\right\}  .
\end{equation}
There is a more useful, kind of affine representation of the elements of
$S_{n}(A_{k})$ as tuples in $S_{k}\times S_{n-k}$ composed with a fixed
bijection $f_{A_{k}}\in S_{n}(A_{k})$. For definiteness, we define $f_{A_{k}}
$ to be the unique bijection in $S_{n}(A_{k})$ which preserves ordering. Then
\begin{equation}
\pi\in S_{n}(A_{k})\Longleftrightarrow\pi=f_{A_{k}}\circ(\pi^{k},\pi
^{n-k}),\quad\text{for some }\pi^{k}\in S_{k},\,\pi^{n-k}\in S_{n-k}.
\label{eq:represent}%
\end{equation}
Here we wrote $(\pi^{k},\pi^{n-k})$ for the permutation that is obtained by
applying $\pi^{k}$ to the first $k$ variables and $\pi^{n-k}$ to the last
$n-k$ variables.

This way of bookkeeping permutations is convenient in \eqref{eq:permutation}
above. Using this representation and the identity \eqref{eq:identity} below, we find that%
\begin{align}
&  \mathrm{tr}_{n\rightarrow k}[|\Phi_{n}\rangle\langle\Phi_{n}|]\nonumber\\
&  =\frac{1}{n!}\sum_{A_{k}\ k\mathrm{-set}}\sum_{\substack{\pi,\sigma\in
S_{n}(A_{k}); \\\pi^{n-k}=\sigma^{n-k}}}\mathrm{sgn}(\pi)\mathrm{sgn}%
(\sigma)|\phi_{\pi(1)}\rangle\langle\phi_{\sigma(1)}|\otimes\cdots\otimes
|\phi_{\pi(k)}\rangle\langle\phi_{\sigma(k)}|\\
&  =\frac{1}{n!}\sum_{A_{k}\ k\mathrm{-set}}\sum_{\substack{\pi,\sigma\in
S_{n}(A_{k}); \\\pi^{n-k}=\sigma^{n-k}}}\mathrm{sgn}(\pi^{k})\mathrm{sgn}%
(\sigma^{k})|\phi_{\pi(1)}\rangle\langle\phi_{\sigma(1)}|\otimes\cdots
\otimes|\phi_{\pi(k)}\rangle\langle\phi_{\sigma(k)}|\\
&  =\frac{(n-k)!}{n!}\sum_{A_{k}\ k\mathrm{-set}}\sum_{\pi^{k},\sigma^{k}\in
S_{k}}\mathrm{sgn}(\pi^{k})\mathrm{sgn}(\sigma^{k}) |\phi_{(f_{A_{k}}\circ
\pi^{k})(1)}\rangle\langle\phi_{(f_{A_{k}}\circ\sigma^{k})(1)}|\otimes
\cdots\otimes|\phi_{(f_{A_{k}}\circ\pi^{k})(k)}\rangle\langle\phi_{(f_{A_{k}%
}\circ\sigma^{k})(k)}|\label{eq:aboveperm}.
\end{align}
We used the following identity:%
\begin{equation}\label{eq:identity}
\mathrm{sgn}(\pi)\mathrm{sgn}(\sigma)=\mathrm{sgn}(\pi^{k})\mathrm{sgn}%
(\sigma^{k}).
\end{equation}
This  is a consequence of the fact that $\mathrm{sgn}$ is a group
homomorphism, i.e., that $\mathrm{sgn}(\sigma_{1}\circ\sigma_{2}%
)=\mathrm{sgn}(\sigma_{1})\mathrm{sgn}(\sigma_{2})$ holds for any two
permutations $\sigma_{1}$ and $\sigma_{2}$. Indeed, we have%
$$
\begin{aligned}
\mathrm{sgn}(\pi)\mathrm{sgn}(\sigma) &  =(\mathrm{sgn}(f_{A_{k}}%
))^{2}\mathrm{sgn}((\pi^{k},\pi^{n-k}))\mathrm{sgn}((\sigma^{k},\sigma
^{n-k}))\\
&  =\mathrm{sgn}((\pi^{k},\pi^{n-k}))\mathrm{sgn}((\sigma^{k},\pi^{n-k}))\\
&  =\mathrm{sgn}((\pi^{k},I_{n-k})\circ(I_{k},\pi^{n-k}))\mathrm{sgn}%
((\sigma^{k},I_{n-k})\circ(I_{k},\pi^{n-k}))\\
&  =\mathrm{sgn}(\pi^{k})\mathrm{sgn}(\sigma^{k}).
\end{aligned}
$$
This proves \eqref{eq:identity}. We now return to \eqref{eq:aboveperm} to conclude the proof of \eqref{eq:marginal}. We observe that 
$$
\mathrm{Perm}(A_{k})=\left\{
f_{A_{k}}\circ\pi^{k}\circ f_{A_{k}}^{-1}\;:\;\pi^{k}\in S_{k}\right\}.
$$
To exploit this, we order each $k$-set $A_k=\{i_1,\ldots,i_k\}$ with $i_1<\cdots<i_k$. Then, by definition, $f_{A_k}(j)=i_j$ for all $1\leq j\leq k$. From this, we find that
$$
f_{A_k}\circ\pi^k(j)=f_{A_k}\circ\pi^k\circ f_{A_k}^{-1}(i_j)=:\tilde\pi^k(i_j)
$$ 
produces a permutation $\tilde\pi^k\in \mathrm{Perm}(A_{k})$. We use this observation to relabel the sum in \eqref{eq:aboveperm}; and we also use  the identity $\mathrm{sgn}(\pi^{k})\mathrm{sgn}(\sigma^{k})=\mathrm{sgn}%
(\tilde{\pi}^{k})\mathrm{sgn}(\tilde{\sigma}^{k})$, which follows by a similar
argument as \eqref{eq:identity} above. We get
\begin{align}
&\frac{(n-k)!}{n!}\sum_{A_{k}\ k\mathrm{-set}}\sum_{\pi^{k},\sigma^{k}\in
S_{k}}\mathrm{sgn}(\pi^{k})\mathrm{sgn}(\sigma^{k})\nonumber
|\phi_{(f_{A_{k}}\circ
\pi^{k})(1)}\rangle\langle\phi_{(f_{A_{k}}\circ\sigma^{k})(1)}|\otimes
\cdots\otimes|\phi_{(f_{A_{k}}\circ\pi^{k})(k)}\rangle\langle\phi_{(f_{A_{k}%
}\circ\sigma^{k})(k)}|\\
&  =\frac{1}{\binom{n}{k}}\sum_{A_{k}\ k\mathrm{-set}}\frac{1}{k!}\sum
_{\tilde{\pi}^{k},\tilde{\sigma}^{k}\in\text{Perm}(A_{k})}\mathrm{sgn}%
(\tilde{\pi}^{k})\mathrm{sgn}(\tilde{\sigma}^{k})|\phi_{\tilde{\pi}^{k}%
(i_1)}\rangle\langle\phi_{\tilde{\sigma}^{k}(i_1)}|\otimes\cdots\otimes
|\phi_{\tilde{\pi}^{k}(i_k)}\rangle\langle\phi_{\tilde{\sigma}^{k}(i_k)}|\\
&  =\frac{1}{\binom{n}{k}}\sum_{A_{k}\ k\mathrm{-set}}|\Phi_{A_{k}}%
\rangle\langle\Phi_{A_{k}}|.
\end{align}
This concludes the proof.
\end{proof}

\section{Limitations on approximate two-fold broadcasts}

As mentioned in the main text, our method also gives limitations on approximate two-fold broadcasting. 

Throughout, we restrict to broadcasts which receive as their input state only a single copy of $\sigma$. In particular, we are not in a situation where ``superbroadcasting'' \cite{SB1,SB2} is possible.

\be{thm}
\label{thm:mainBC}
Fix two mixed states $\sigma_1$ and $\sigma_2$. Suppose that the quantum channel
$\Lam_{A\to AB}$ is a simultaneous approximate broadcast of $\sigma_1$ and $\sigma_2$, i.e., that
\beq
\rho^{\mathrm{out}}_{i,A}=\rho^{\mathrm{out}}_{i,B}=\tilde\sigma_{i},
\qquad \rho^{\mathrm{out}}_{i,AB}:=\Lam(\sigma_{i,A})
\eeq
for $i=1,2$. Then
\beq
\label{eq:mainBC}
D(\sigma_1\|\sigma_2)-D(\tilde\sigma_1\|\tilde\sigma_2)\geq 
\De_{\curly{R}}(\tilde\sigma_1,\tilde\sigma_2).
\eeq
where we have introduced the (channel dependent) ``recovery difference''
\beq
\label{eq:commondefn}
\De_\curly{R}(\tilde\sigma_1,\tilde\sigma_2):=
\frac{1}{8} 
\int_\R \|\curly{R}^t_{B,\rho^{\mathrm{out}}_{2,AB}}(\tilde\sigma_{1,A})-\curly{R}^t_{A,\rho^{\mathrm{out}}_{2,AB}}(\tilde\sigma_{1,B})\|_1^2 \ \d\beta(t).
\eeq
which features the probability distribution $\beta(t):=\frac{\pi}{2}(1+\cosh(\pi t))^{-1}$ and the  rotated Petz recovery map defined by
\beq
\label{eq:rPetz}
\curly{R}^t_{A,X}(\cdot):=
X_{AB}^{(1+it)/2}	\l(I_A\otimes X_B^{-(1+it)/2}	(\cdot)X_B^{-(1-it)/2}\r)X_{AB}^{(1-it)/2}.
\eeq
\e{thm}


The proof is given at the end of this appendix. We emphasize that the definition \eqref{eq:commondefn} of the recovery difference $\De_\curly{R}(\tilde\sigma_1,\tilde\sigma_2)$ is independent of $\rho^{\mathrm{out}}_{1,AB}$.
The  rotated Petz recovery map \eqref{eq:rPetz} appears in the strengthening of the monotonicity of relative entropy \cite{Jungeetal}, recalled here as Theorem~\ref{thm:monotonicity} in the appendix. The rotated Petz recovery map is chosen such that the second state is perfectly recovered, i.e.
$$
\curly{R}^t_{B,\rho^{\mathrm{out}}_{2,AB}}(\tilde\sigma_{2,A})=\curly{R}^t_{A,\rho^{\mathrm{out}}_{2,AB}}(\tilde\sigma_{2,B})=\rho^{\mathrm{out}}_{2,AB}.
$$

%
%
%
 One may wonder if the vanishing of the recovery difference implies that $\tilde\sigma_1$ and $\tilde\sigma_2$ commute, i.e., if Theorem~\ref{thm:UBC} is recovered from Theorem~\ref{thm:mainBC}. Assume that $\De_\curly{R}(\tilde\sigma_1,\tilde\sigma_2)=0$. One would like to show that this implies that $\tilde\sigma_1$ and $\tilde\sigma_2$ commute. A natural idea is to follow the proof of Theorem~\ref{thm:UBC} in \cite{KH08}. There, the authors appeal to a condition for equality in the monotonicity of the relative entropy by Ruskai \cite{Ruskai} (see also \cite{Haydenetal,Petz03,Petz86}). It yields (see (11) in \cite{KH08})
\beq
\label{eq:SigmaKH}
(\Sigma_A\otimes I_B) P_{AB}=(I_A\otimes \Sigma_B)P_{AB},\qquad \Sigma:=\log\sigma_1-\log\sigma_2.
\eeq
where $P_{AB}$ projects onto the support of $\rho_{2,AB}^{\mathrm{out}}$. We have

\be{lm}
\label{lm:KH}
If \eqref{eq:SigmaKH} holds, then $\tilde\sigma_1$ and $\tilde\sigma_2$ commute.
\e{lm}
This was observed without proof in \cite{KH08}; for completeness we include the

\be{proof}[Proof of Lemma \ref{lm:KH}]
First, recall our standing assumption that $\ker\tilde\sigma_2\subset \ker\tilde\sigma_1$. It yields that $\tilde\sigma_1\tilde\sigma_2=0=\tilde\sigma_2\tilde\sigma_1$ on $\ker\tilde\sigma_2$
and so it suffices to consider the subspace $X:=(\ker\tilde\sigma_2)^\perp$ in the following.

Fix a vector $\ket{k}\in X$. Then, by the definition of the partial trace, there exists another vector $\ket{l}$ such that 
$$
\ket{k}_A\otimes\ket{l}_B\in (\ker\rho^{\mathrm{out}}_2)^\perp=\mathrm{supp} \rho^{\mathrm{out}}_2.
$$
Hence we have \eqref{eq:Sigma} when acting on $\ket{k}\otimes\ket{l}$, which implies
$
\Sigma \ket{k}=\ket{k}.
$
Since $\ket{k}\in X$ was arbitrary, we see that $\Sigma$ acts as the identity on $X$. Moreover, $X=\mathrm{ran}\tilde\sigma_2$ is an invariant subspace for $\sigma_2$ and so we can find a unitary $U:X\to X$ such that $U^*\tilde
\sigma_2 U=:\Lam$ is diagonal. By definition \eqref{eq:Sigma} of $\Sigma$, it follows that, on $X$,
$$
I_X=\Lam^{-1/2-it/2}	U^*\tilde\sigma_1 U \Lam^{-1/2+it/2}.
$$
Hence, $U^*\tilde\sigma_1 U$ is diagonal as well, implying that $\tilde\sigma_1$ and $\tilde\sigma_2$ commute.
\e{proof}

Contrary to \cite{KH08}, the assumption $\De_\curly{R}(\tilde\sigma_1,\tilde\sigma_2)=0$, by \eqref{eq:commondefn}, yields only the slightly weaker identity
\beq
\label{eq:Sigma}
P_{AB}(\Sigma_A\otimes I_B) P_{AB}=P_{AB}(I_A\otimes \Sigma_B)P_{AB},\qquad \Sigma:=\tilde\sigma_{2}^{-1/2-it/}	\tilde\sigma_{1}\tilde\sigma_{2}^{-1/2+it/2}.
\eeq 
Note the additional projection $P_{AB}$ in \eqref{eq:Sigma} as compared to \eqref{eq:SigmaKH}. It is due to the symmetrical appearance of $\rho_2^{\mathrm{out}}$ in the Petz recovery map \eqref{eq:rPetz}. In the special case that $P_{AB}$ projects onto a subset of the ``diagonal'' $\ket{k}_A\otimes\ket{k}_B$, \eqref{eq:Sigma} holds trivially. In particular, \eqref{eq:Sigma} does \emph{not} imply that $\tilde\sigma_1$ and $\tilde\sigma_2$ commute.\\

Now, if one is intent on recovering the no-broadcasting Theorem \ref{thm:UBC}, one can in fact replace $\De_\curly{R}$ on the right-hand side in \eqref{eq:mainBC} by an alternative expression whose vanishing does imply that $\tilde\sigma_1$ and $\tilde\sigma_2$ commute. This alternative expression is derived from a strengthened monotonicity inequality of Carlen and Lieb \cite{CarlenLieb14} and reads
\beq
\label{eq:Decl}
\begin{aligned}
\De_{CL}(\tilde\sigma_1,\tilde\sigma_2):=&\frac{1}{2}\l\|\sqrt{\rho_{2,AB}^{\mathrm{out}}}-\exp\l(\frac{1}{2}(\log\rho_{2,AB}^{\mathrm{out}}-\log\tilde\sigma_{2,A}+\log\tilde\sigma_{1,A})P_{AB}\r)\r\|_2^2\\
&+\frac{1}{2}\l\|\sqrt{\rho_{2,AB}^{\mathrm{out}}}-\exp\l(\frac{1}{2}(\log\rho_{2,AB}^{\mathrm{out}}-\log\tilde\sigma_{2,B}+\log\tilde\sigma_{1,B})P_{AB}\r)\r\|_2^2
\end{aligned}
\eeq
Using the result of \cite{CarlenLieb14} in the proof of Theorem \ref{thm:mainBC} gives
$$
D(\sigma_1\|\sigma_2)-D(\tilde\sigma_1\|\tilde\sigma_2)\geq \De_{\operatorname{CL}}(\tilde\sigma_1,\tilde\sigma_2),
$$
The vanishing $\De_{CL}(\tilde\sigma_1,\tilde\sigma_2)=0$ implies Ruskai's condition \eqref{eq:SigmaKH} and consequently that $\tilde\sigma_1$ and $\tilde\sigma_2$ commute, i.e.
 \beq
 \label{eq:commute}
\De_{\operatorname{CL}}(\tilde\sigma_1,\tilde\sigma_2)=0\quad\Rightarrow\quad [\tilde\sigma_1,\tilde\sigma_2]=0.
\eeq
However, $\De_{\operatorname{CL}}$ does not appear to have information-theoretic content, while $\De_{\curly{R}}$ features the Petz recovery map.

We close this appendix with the

\be{proof}[Proof of Theorem~\ref{thm:mainBC}.]

The proof is based on the following key estimate.  It is a variant of Theorem~\ref{thm:monotonicity}, which was proved in \cite{Jungeetal}.

 \be{lm}[Key estimate]
\label{lm:main}
Fix two quantum states $\sigma_1$ and $\sigma_2$. For any choice of quantum channel $\Lam_{A\to AB}$, we define
\beq
\rho^{\mathrm{out}}_i:=\Lam(\sigma_{i,A}),\qquad (i=1,2).
\eeq
Let $\beta(t)=\frac{\pi}{2}(1+\cosh(\pi t))^{-1}$.

\be{enumerate}[label=(\roman*)]
\item
We have
\beq
\label{eq:maini'}
D(\sigma_1\|\sigma_2)-D(\rho^{\mathrm{out}}_{1,B}\|\rho^{\mathrm{out}}_{2,B})
\geq
-\int_\R \log F\l(\rho^{\mathrm{out}}_{1,AB}, \curly{R}^t_{A,\rho^{\mathrm{out}}_{2,AB}}(\rho^{\mathrm{out}}_{1,B})	\r) \,\d\beta(t).
\eeq
\beq
\label{eq:maini}
D(\sigma_1\|\sigma_2)-D(\rho^{\mathrm{out}}_{1,A}\|\rho^{\mathrm{out}}_{2,A})
\geq
-\int_\R \log F\l(\rho^{\mathrm{out}}_{1,AB}, \curly{R}^t_{B,\rho^{\mathrm{out}}_{2,AB}}(\rho^{\mathrm{out}}_{1,A})	\r) \,\d\beta(t),
\eeq
where the rotated Petz recovery map $\curly{R}^t_{A,X}$ was defined in \eqref{eq:rPetz}.
\item Suppose that the output state $\rho^{\mathrm{out}}_{i,AB}$ has identical marginals, i.e.\
$$
\rho^{\mathrm{out}}_{i,A}=\rho^{\mathrm{out}}_{i,B}=:\tilde \sigma_i,\qquad (i=1,2).
$$
Then we have
\beq
\label{eq:mainii}
D(\sigma_1\|\sigma_2)-D(\tilde\sigma_1\|\tilde\sigma_2)
\geq
\be{cases}
-\int_\R \log F\l(\rho^{\mathrm{out}}_{1,AB}, \curly{R}^t_{A,\rho^{\mathrm{out}}_{2,AB}}(\tilde\sigma_{1,B})	\r) \,\d\beta(t)\\
-\int_\R \log F\l(\rho^{\mathrm{out}}_{1,AB}, \curly{R}^t_{B,\rho^{\mathrm{out}}_{2,AB}}(\tilde\sigma_{1,A})	\r) \,\d\beta(t).
\e{cases}
\eeq 
\e{enumerate}
\end{lm}

\be{proof}[Proof of Lemma \ref{lm:main}]
The standard monotonicity of quantum relative entropy under quantum channels (without a remainder term) gives
$$
D(\sigma_1\|\sigma_2)\geq D(\Lam(\sigma_1)\|\Lam(\sigma_2))=D(\rho^{\mathrm{out}}_1\| \rho^{\mathrm{out}}_2).
$$
Consider the last expression. When we apply the partial trace over the $A$ subsystem to both states and use Theorem~\ref{thm:monotonicity}, we obtain 
\begin{equation}
D(\rho^{\mathrm{out}}_1\| \rho^{\mathrm{out}}_2)\geq D(\rho^{\mathrm{out}}_{1,B}\|\rho^{\mathrm{out}}_{2,B})
-\int_\R \log F\l(\rho^{\mathrm{out}}_{1,AB}, \curly{R}^t_{\rho^{\mathrm{out}}_{2,AB}}(\rho^{\mathrm{out}}_{1,B})	\r) \, \d\beta(t). \nonumber
\end{equation}
This proves \eqref{eq:maini'} and \eqref{eq:maini} follows by the same argument, only that the $B$ subsystem is traced out now. Statement (ii) is immediate. 
\e{proof}

With Lemma \ref{lm:main} at our disposal, we can now prove Theorem~\ref{thm:mainBC}. We begin by applying Lemma \ref{lm:main} (ii), averaging the two lines in \eqref{eq:mainii}. We get
\begin{equation}
D(\sigma_1\|\sigma_2)-D(\tilde\sigma_1\|\tilde\sigma_2)
\geq
-\frac{1}{2}\int_\R 
\Bigg(\log F\l(\rho^{\mathrm{out}}_{1,AB}, \curly{R}^t_{B,\rho^{\mathrm{out}}_{2,AB}}(\tilde\sigma_{1,A})\r)	
+\log F\l(\rho^{\mathrm{out}}_{1,AB}, \curly{R}^t_{A,\rho^{\mathrm{out}}_{2,AB}}(\tilde\sigma_{1,B})\r)	\Bigg) \, \d\beta(t).\nonumber
\end{equation}
By an elementary estimate and the Fuchs-van de Graaf inequality \cite{FG}, we have for density operators $\omega$ and $\tau$ that
$$
-\log F(\omega,\tau)\geq 1-F(\omega,\tau)\geq \frac{1}{4}\|\omega-\tau\|_1^2.
$$
We apply this to the integrand above, followed by the estimate
$$
\|X-Y\|_1^2+\|X-Z\|_1^2\geq \frac{1}{2}\|Y-Z\|_1^2,
$$
which is a consequence of the triangle inequality and the elementary bound $2ab\leq a^2+b^2$. We conclude
\begin{equation}
D(\sigma_1\|\sigma_2)-D(\tilde\sigma_1\|\tilde\sigma_2)
\geq \frac{1}{8}\int_\R \|\curly{R}^t_{B,\rho^{\mathrm{out}}_{2,AB}}(\tilde\sigma_{1,A})-\curly{R}^t_{A,\rho^{\mathrm{out}}_{2,AB}}(\tilde\sigma_{1,B})\|_1^2 \, \d\beta(t). \nonumber
\end{equation}
This proves Theorem~\ref{thm:mainBC}.
\e{proof}

\end{document}